\DeclareMathAlphabet{\mathcal}{OMS}{cmsy}{m}{n}
\newcommand{\view}{\mathcal{V}}
\newcommand{\activities}{\mathcal{A}}
\newcommand{\friends}{\activities_H^d}
\newcommand{\observers}{\activities_L^d}
\newcommand{\documents}{Docs}
\newtheorem{definition}{Definition}
\newtheorem{theorem}{Theorem}
\title{Possibilistic Information Flow Control for Workflow Management Systems\thanks{This research is supported by the Deutsche Forschungsgemeinschaft (DFG) under grant Hu737/5-1, which is part of the DFG priority programme 1496 ``Reliably Secure Software Systems.''}}
\author{Thomas Bauereiss \qquad\qquad Dieter Hutter
\institute{German Research Center for Artificial Intelligence (DFKI) \\
Bibliothekstr.~1\\
D-28359 Bremen, Germany}
\email{thomas.bauereiss@dfki.de \qquad\quad hutter@dfki.de}
}
\begin{document}

\maketitle 

\begin{abstract}
  In workflows and business processes, there are often security requirements on
  both the data, i.e.\ confidentiality and integrity, and the process, e.g.\
  separation of duty. Graphical notations exist for specifying both workflows
  and associated security requirements. We present an approach for formally
  verifying that a workflow satisfies such security requirements.  For
  this purpose, we define the semantics of a workflow as a state-event system
  and formalise security properties in a trace-based way, i.e.\ on an abstract
  level without depending on details of enforcement mechanisms such as
  Role-Based Access Control (RBAC).  This formal model then allows us to build
  upon well-known verification techniques for information flow control. We
  describe how a compositional verification methodology for possibilistic
  information flow can be adapted to verify that a specification of a
  distributed workflow management system satisfies security requirements on
  both data and processes.
\end{abstract}

\section{Introduction}
\label{sec:introduction}

Computer-supported workflows and business process automation are widespread in
enterprises and organisations.  Workflow management systems support the
enactment of such workflows by coordinating the work of human participants in
the workflow for human activities as well as by automatically executing
activities that can be mechanised.  Graphical notations such as BPMN allow for
the specification of workflows in an intuitive way.  In addition to the control
and data flows, there are typically various security requirements that need to
be considered during the design, implementation and execution of workflows. A
well-known security requirement on workflows is separation of duty for fraud
prevention \cite{clark_comparison_1987}. Confidentiality of data is another
important security requirement, e.g.\ the confidentiality of medical data from
non-medical personnel. These two can be seen as examples for different types of
security requirements. On the one hand, there are security requirements on
processes, i.e.\ constraints on the control flow and the authorisation of
users, and on the other hand, there are security requirements on data, i.e.\
constraints on the flow of information.  Several proposals to extend BPMN with
graphical notations for both kinds of security requirements exist
\cite{brucker.ea:securebpmn:2012,DBLP:journals/ieicet/RodriguezFP07,DBLP:journals/re/WolterM10}.

In this paper, we focus on the question of how the semantics of such a notation
can be defined and how to use them to formally verify both types of security
requirements.  We do this on an abstract level without having to refer to
details of enforcement mechanisms such as role-based access control (RBAC).
For this purpose, we model the behaviour of a workflow as a set of traces of
events, each representing a possible run of the workflow, and formalise our
security requirements in a declarative way as properties of such trace sets.
We map process requirements such as separation of duty to sets of allowed
traces, corresponding to safety properties \cite{alpern_recognizing_1987},
whereas we map requirements on data to information flow properties, which have
been extensively studied
\cite{sabelfeld_language-based_2003,mantel_possibilistic_2000,zakinthinos_general_1997,mclean_general_1996,focardi_classification_1995}.
This allows us to verify the absence not only of direct information leaks via
unauthorised access, but also of indirect information leaks via observing the
behaviour of the system. For example, if the control flow depends on a
confidential data item and an unauthorised user observes which path of the
control flow has been taken, they might be able to deduce the confidential
value of the data item.

The relation between possibilistic information flow and safety properties is
not trivial due to the refinement paradox, i.e.\ enforcing a safety property by
removing disallowed traces might introduce new information leaks
\cite{mantel_preserving_2001}. We discuss this relation for the case of
separation of duty, give sufficient conditions for the compatibility with
information flow properties, and show that these conditions are satisfied in
our example setting.

We build upon the MAKS framework for possibilistic information flow control
\cite{mantel_possibilistic_2000}, which is suitable for formulating and
verifying information flow policies at the specification level, and in which
many information flow properties from the literature can be expressed.  We
describe how a compositional verification methodology
\cite{hutter_security_2007} can be applied to verify our system models, which
has the advantage that we can split up the verification task into separate
verification tasks of the individual activities that make up the overall
workflow.

Essentially, our approach allows for the formal modelling of workflows and the
verification of security requirements on data and processes at a high level of
abstraction.  We have verified our results using the interactive theorem prover
Isabelle/HOL \cite{nipkow2002isabelle}. To improve practicality, future work
will focus on refinement approaches of these specifications towards concrete
implementations, while preserving as much as possible of the security
properties established on the abstract level.  The long-term goal of our work
is to facilitate the step-wise development of secure workflow management
systems, starting from an abstract specification, derived from a workflow
diagram, for example, then performing a series of refinement steps and
eventually arriving at a secure implementation.

The rest of this paper is structured as follows.  In the following subsection,
we present a running example of a workflow that we will use for illustration
throughout this paper. Section \ref{sec:model} introduces the system model. In
Section \ref{sec:policy}, we elaborate on modelling confidentiality and
separation-of-duty requirements, respectively. In Section
\ref{sec:verification}, we describe how existing techniques for compositional
verification of information flow properties can be applied and adapted for our
workflow systems. Section \ref{sec:related} discusses related work and Section
\ref{sec:conclusion} concludes the paper.

\subsection{Example Scenario}

As a running example, we use the workflow depicted in Figure
\ref{fig:workflow}, adapted from an industry use-case described in an
(unpublished) paper by A.~Brucker and I.~Hang.  It models a hiring process
including interviews and medical examinations.  The swimlanes represent two
departments of the organisation running the hiring workflow, the Human
Resources (HR) and the medical department. The placement of activities in the
swimlanes indicates the responsible department, and thus the authorised
employees. The input and output relations of activities are depicted as
directed flows of documents between activities.

\begin{figure}[htp]
  \begin{center}
    \includegraphics[width=\textwidth]{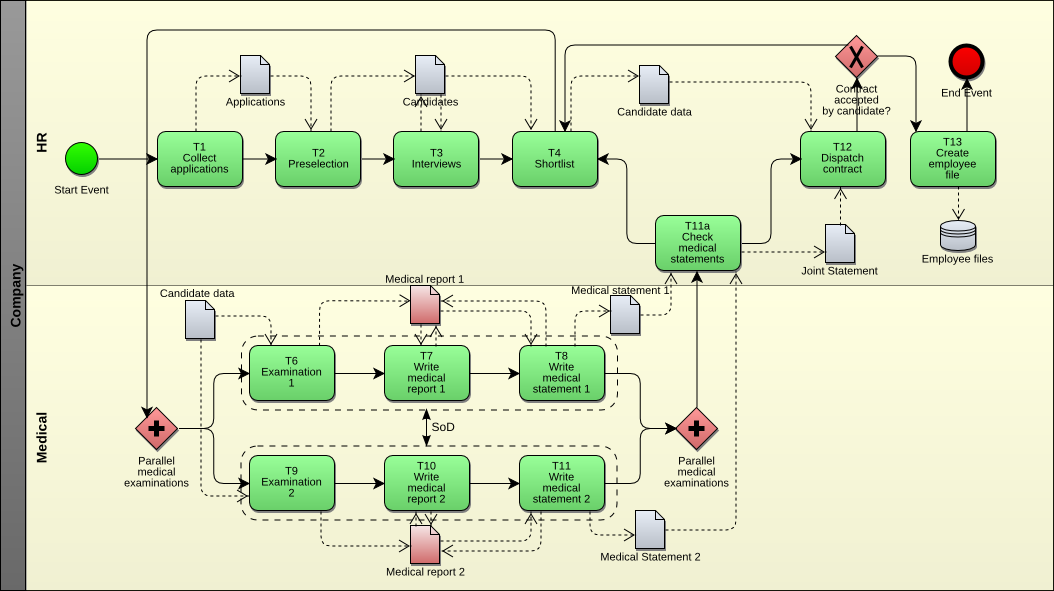}
  \end{center}
  \caption{Example workflow (adapted from A.~Brucker and I.~Hang, unpublished)}
  \label{fig:workflow}
\end{figure}

We use this workflow to illustrate security requirements with respect to both
data and process.  On the process side, we demand separation of duty between
the two medical examinations, i.e.\ they have to be performed by different
medical officers, such that no single medical officer can manipulate the hiring
process by rejecting unwanted candidates for fabricated medical reasons.
Regarding confidentiality requirements, we assume that the two medical reports
are highly confidential due to their potentially sensitive contents. In
particular, they are confidential for the employees in the HR department, who
should only be able to access information on a need-to-know basis, e.g.\ the
CVs of applicants.

In general, we assume there is a set of security domains, which are used to
classify documents exchanged between activities, and a flow policy that
specifies the allowed information flows between domains.  We assign a domain to
every document (a security classification) and to every employee (a security
clearance).  These classifications and clearances determine which users are
allowed to participate in which activities of the workflow.  For example,
employees with an $HR$ clearance are not allowed to participate in the
activities creating the medical report; otherwise, they would have direct
access to confidential medical data.  We formally define the constraints
regarding classifications, clearances and flow policies in Section
\ref{sec:confidentiality}.

Besides direct information flows via transfer of documents, we aim to control
indirect flows, where confidential information is deducible from observations
of the system behaviour. For example, an HR employee in the above workflow can
deduce whether the two medical officers agreed about the fitness of the
candidate by observing whether the workflow proceeds to activity 12 after the
medical examinations or reverts to activity 4. This is acceptable and actually
necessary in our scenario, as long as this is the only bit of information about
the medical condition of the candidate that can be deduced by HR personnel. Our
goal is to verify that the workflow indeed does not leak any additional medical
information to non-medical personnel. In the next section, we begin by formally
modelling the workflow, and then proceed to formalise the security
requirements.

\subsection{Preliminaries}

We briefly recall the definitions of (state-) event systems and security
predicates from the MAKS framework for possibilistic information flow
\cite{mantel_possibilistic_2000} that we use in this paper.  An event system
$ES = (E, I, O, Tr)$ is essentially a (prefix-closed) set of traces $Tr
\subseteq E^*$ that are finite sequences of events in the event set $E$.  The
disjoint sets $I \subseteq E$ and $O \subseteq E$ designate input and output
events, respectively.  We denote the empty trace as $\langle \rangle$, the
concatenation of traces $\alpha$ and $\beta$ as $\alpha .  \beta$, and the
projection of a trace $\alpha$ onto a set $E$ as $\alpha|_E$.  In the
composition $ES_1 \| ES_2$ of two event systems $ES_1$ and $ES_2$, input events
of one system matching output events of the other system are connected (and
vice versa) and thus become internal events of the composed system. The set of
traces is the set of interleaved traces of the two systems, synchronised on
events in $E_1 \cap E_2$: 
\[Tr(ES_1 \| ES_2) = \left\{ \alpha \in (E_1 \cup E_2)^* \mid \alpha|_{E_1} \in
Tr(ES_1) \land \alpha|_{E_2} \in Tr(ES_2) \right\}\]
A state-event system $SES = (E, I, O, S, s_0, T)$ has a set of states $S$, a
starting state $s_0 \in S$, and a transition relation $T \subseteq (S \times E
\times S)$.  The event system \emph{induced} by a state-event system has the
same sets of events and the set of traces that is enabled from the starting
state via the transition relation.

The MAKS framework defines a collection of basic security predicates (BSPs).
Many existing information flow properties from the literature can be expressed
as a combination of these BSPs. Each BSP is a predicate on a set of traces with
respect to a view $\view$. A view $\view = (V, N, C)$ on an event system $ES =
(E, I, O, Tr)$ is defined as a triple of event sets that form a disjoint
partition of $E$. The set $V$ defines the set of events that are visible for an
observer, $C$ are the confidential events, and the events in $N$ are neither
visible nor confidential. Notable examples for BSPs, that we will use in this
paper, are backwards-strict deletion ($BSD$) and backwards-strict insertion of
admissible confidential events ($BSIA$)\footnote{In
  \cite{mantel_composition_2002}, $BSIA$ is defined with an additional
  parameter $\rho$ that allows to strengthen the property by further specifying
  positions at which confidential events must be insertable. For simplicity, we
  choose to fix this parameter to $\rho_E$ in the notation of
  \cite{mantel_composition_2002}, i.e\ we only require confidential events to
  be insertable into a trace without interfering with observations if they are
  in principle admissible exactly at that point in the trace.},
defined in \cite{mantel_composition_2002} as follows:
\begin{align*}
  BSD_{\view}(Tr) \equiv & \, \forall \alpha, \beta \in E^*. \forall c \in C. \left( \beta . c . \alpha \in Tr \land \alpha |_C = \langle \rangle \right)  \\
  & \Rightarrow \exists \alpha' \in E^*. \left( \alpha' |_V = \alpha |_V \land \alpha' |_C = \langle \rangle \land \beta . \alpha' \in Tr \right) \\
  BSIA_{\view}(Tr) \equiv & \, \forall \alpha, \beta \in E^*. \forall c \in C. \left( \beta . \alpha \in Tr \land \alpha |_C = \langle \rangle \land \beta . c \in Tr \right)  \\
  & \Rightarrow \exists \alpha' \in E^*. \left( \alpha' |_V = \alpha |_V \land \alpha' |_C = \langle \rangle \land \beta . c . \alpha' \in Tr \right)
\end{align*}
Intuitively, the former requires that the occurrence of confidential events
must not be deducible, while the latter requires that the \emph{non}-occurrence
of confidential events must not be deducible. Technically, they are closure
properties of sets of traces. For example, if a trace in $Tr$ contains a
confidential event, then $BSD$ requires that a corresponding trace without the
confidential event exists in $Tr$ that yields the same observations. This means
the two traces must be equal with respect to visible $V$-events, while
$N$-events might be adapted to correct the deletion of the confidential event.

\section{System Model}
\label{sec:model}

In order to verify that a workflow satisfies given security requirements, we
need a formal model of workflows and their behaviour.  We first define our
notion of workflows. For simplicity, we omit aspects such as exceptions or
compensation handling, but our definition suffices for our purpose of
discussing the verification of security requirements for workflows.

\begin{definition}
  A workflow $W = (\activities, \documents, SF, MF, U)$ consists of
  \begin{compactitem}
    \item a set $\activities$ of activities, 
    \item a set $\documents$ of data items,
    \item a set $SF \subseteq (\activities \times \activities)$ of sequence flows, where $(a_1,
      a_2) \in SF$ represents the fact that upon completion of $a_1$, it may
      send a trigger to $a_2$ signalling it to start execution, and
    \item a set $MF \subseteq (\activities \times \documents \times \activities)$ of message flows, where
      $(a_1, d, a_2) \in MF$ represents data item $d$ being an output of
      activity $a_1$ and an input to $a_2$, and
    \item a set $U$ of users participating in the workflow.
  \end{compactitem}
\end{definition}
The sets $\activities$ and $\documents$ correspond to the nodes of a workflow
diagram such as Figure~\ref{fig:workflow}, while $SF$ and $MF$ correspond to
the solid and dashed edges, respectively.

We define the behaviour of workflows, not in a monolithic way, but in terms of
the behaviours of components representing activities communicating with each
other.  As we will show in Section~\ref{sec:verification}, this simplifies the
verification, because it allows us to use the decomposition methodology of
\cite{hutter_security_2007} to verify the security of the overall system by
verifying security properties of the subcomponents. We believe that such a
decomposition approach can help in scaling up verification of information flow
properties to larger systems.

Each activity $a$ is therefore modelled as a state-event system $SES_a = \left(
E_a, I_a, O_a, S_a, s_a^0, T_a \right)$ analogously to Definition $3$ of
\cite{hutter_security_2007}.  The set of events $E_a$ consists of events of the
form
\begin{compactitem}
  \item $Start_a(u)$, starting the activity $a$ and assigning it to the user $u
    \in U$,
  \item $End_a(u)$, marking the end of the activity,
  \item $Send_a(a', msg)$ and $Recv_a(a', msg)$, representing the sending (or
    receiving, respectively) of a message $msg$ from activity $a$ to activity
    $a'$ (or vice versa),
  \item $Setval_a(u, i, val)$ and $Outval_a(u, i, val)$, representing a user $u
    \in U$ reading (or setting, respectively) the value $val$ of data item $i$,
    and
  \item a set of internal events $\tau_a$.
\end{compactitem}
\begin{sloppypar}
We denote the set of events of a given activity $a \in \activities$ as $E_a$,
and the set of all events in a workflow as $E_W = \bigcup_{a \in \activities}
E_a$.  We denote the set of events of a given user $u \in U$ as $E_u = \left\{
Start_a(u) \mid a \in \activities \right\} \cup \left\{ Setval_a(u, i, val)
\mid a \in \activities, i \in \documents, val \in Val \right\} \cup \left\{
Outval_a(u, i, val) \mid a \in \activities, i \in \documents, val \in Val
\right\} \cup \left\{ End_a(u) \mid a \in \activities \right\}$, and the set of
all user interaction events as $E_U = \bigcup_{u \in U} E_u$.  The messages
between activities can have the form
\end{sloppypar}
\begin{compactitem}
  \item $Trigger$, used to trigger a sequence flow to a successor activity in
    the workflow,
  \item $Data(i, v)$, used to transfer the value $v$ for data item $i$, and
  \item $AckData(i)$, used to acknowledge the receipt of a data item.
\end{compactitem}
Using separate messages for data and sequence flows is inspired by the BPMN
standard, which describes its (informal) execution semantics in terms of tokens
that are passed from one activity to the next, representing control flow
separately from data flows. In addition, this separation simplifies the
modelling of confidentiality, as it becomes straightforward to classify events
transporting $Data$ messages into confidential or non-confidential events based
on the classification of the data items they transport.

The local states of the activities include program variables such as a program
counter and a mapping $Mem \colon \documents \to Val$, storing the values of
data items.  After initialisation, the activity waits for messages from other
activities, transferring input data or triggering a sequence flow. When one (or
more) of the incoming transitions have been triggered, the activity internally
computes output messages (possibly via interaction with users), sends them via
the outgoing data associations, and triggers outgoing sequence flows.  In
Appendix~\ref{app:spec}, we formally specify two types of activities as
examples, namely user activities that allow users to read and write data items,
and gateway activities that make a decision on the control flow based on the
contents of their input data items.

Each of these state-event systems $SES_a$ induces a corresponding event system
$ES_a$. The overall system then emerges from the composition of these event
systems $ES_a$ for every activity $a \in \activities$, together with a
communication platform $ES_P$:
\[ES_{W} = \left( \|_{a \in \activities} ES_a \right) \| ES_P\]
We call $ES_W$ the \emph{workflow system} for the workflow $W$.  We reuse the
communication platform of \cite{hutter_security_2007}, which is formally
specified in Section 2.3 of \cite{hutter_security_2007}.  It asynchronously
forwards messages between the activities.  As we do not assume that it provides
guarantees regarding message delivery, its specification is very
simple.\footnote{However, it is possible to adapt the decomposition methodology
  to other communication models, e.g.\ providing some notion of reliability of
  message delivery or means for synchronous communication. For example, see
  Appendix~\ref{app:spec} for some remarks on guaranteeing a notion of ordered
message delivery.} Upon composition with the platform, the communication events
between the activities become internal events of the composed system. Only the
communication events with users remain input and output events. These events
form the user interface of the workflow system.

A simple version of our example workflow can be represented as a composition of
instances of the activity types specified in Appendix~\ref{app:spec}.  We can
represent the activity T11a in Figure~\ref{fig:workflow} as a gateway that
decides on the control flow based on the results of the medical examinations:
If they are positive, the workflow continues with dispatching the contract,
otherwise it goes back to selecting another candidate from the shortlist.  The
other activities essentially consist of users reading and generating documents,
so we can represent them as user activities.  Of course, these activities can
be enriched with further details, e.g.\ the interviews can be expanded to
subprocesses of their own, but we assume that this is handled in a subsequent
refinement step and consider only the abstract level in this paper.

\section{Security Policies}
\label{sec:policy}

\subsection{Confidentiality}
\label{sec:confidentiality}

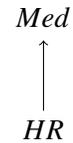
\begin{wrapfigure}{R}{0.35\textwidth}
  \centering
  \begin{tikzpicture}[shorten >=1pt,node distance=1.5cm,auto]
    \node (hr) {$HR$};
    \node (med) [above of=hr] {$Med$};
    \path[->] (hr) edge (med) ;
  \end{tikzpicture}
  \caption{Flow policy}
  \label{fig:confpol}
\end{wrapfigure}

We assign security domains from a set $\mathcal{D}$ of domains to the data
items exchanged between the activities of the workflow. We denote this domain
assignment function by $dom \colon \documents \rightarrow \mathcal{D}$. A flow
policy is a reflexive and transitive relation on domains and specifies from
which domains to which other domains information may
flow.\cite{myers_enforcing_2006} Note that, even though we focus on
confidentiality in this paper, also integrity requirements can be seen as a
dual to confidentiality and handled using information flow control. For
example, in \cite{myers_enforcing_2006} a lattice of combined security levels
is built as a product of a confidentiality lattice and an integrity lattice.
For our example workflow, we only require two confidentiality domains $HR$ and
$Med$.  The medical reports $MedReport1$ and $MedReport2$ created by activities
T6--T8 and T9--T11 in Figure~\ref{fig:workflow} are assigned to the $Med$
confidentiality domain, and other data items to the $HR$ domain. The example
flow policy states that information may flow from $HR$ to $Med$, but not vice
versa, i.e.\ $HR \leadsto Med$ and $Med \not \leadsto HR$ (see Figure
\ref{fig:confpol}).

Users read and write the contents of data items via the inputs and outputs of
activities they participate in. In order to exclude unwanted direct information
flows, we have to make sure that the classifications of the data items that
users work with are compatible with their clearances. A straightforward
approach is to enforce a Bell-LaPadula style mandatory access control. This can
be formulated in terms of classifications that are assigned to activities based
on the classifications of their inputs and outputs:

\begin{definition}
  \label{def:mac}
  An activity classification $cl_{\activities} \colon \activities \to
  \mathcal{D}$ is an assignment of domains to activities such that
  \begin{enumerate}
    \item for all input data items $i$ of an activity $a$, $dom(i) \leadsto
      cl_{\activities}(a)$, and
    \item for all output data items $i$ of an activity $a$ that may be assigned
      to untrusted users, $cl_{\activities}(a) \leadsto dom(i)$.
  \end{enumerate}
\end{definition}

We allow users to participate in an activity of a given classification only if
they have a matching clearance.  We denote the mapping of users to clearances
as $cl_U \colon U \to \mathcal{D}$.  The conditions in Definition~\ref{def:mac}
correspond to the Simple Security and $*$-Property of the Bell-LaPadula model,
respectively. Note that we relax the $*$-Property by allowing trusted users to
downgrade data items. Otherwise, we would not be able to assign a
classification to the activities T8 and T11 in our example workflow, because
they have high inputs (the medical reports) and low outputs (the statements
about the final result of examinations).  However, this specific flow of
information in the example is acceptable and necessary, because the output
should contain only the non-confidential final decision of the medical
officer\footnote{Which can be further enforced by allowing only Boolean values
as content of the low output.} required by the HR department, while the
detailed content of the medical reports remains classified as confidential.
Essentially, we admit inputs and outputs of trusted users to act as a channel
for declassification that is not formally controlled by our information flow
analysis.  It would be possible to model declassification more explicitly, e.g.
using intransitive flow policies \cite{mantel_information_2001}, but for
simplicity we choose this solution for this paper. The same approach is
followed in \cite{DBLP:journals/re/WolterM10}, for example. See
\cite{stork_downgrading_1975} for an early discussion of this approach to
downgrading and \cite{DBLP:journals/jcs/SabelfeldS09} for a general overview of
principles and dimensions of declassification.

Regardless of whether trusted users are present or not, we want to verify that
the system itself does not leak information about data items $i$ with
classification $dom(i) \not \leadsto d$ to users with clearance $d$. The set of
confidential events for a domain $d$ thus consists of events setting or reading
values of these data items, while events of activities whose classification is
allowed to flow into $d$ are considered to be potentially observable for users
in domain $d$:

\begin{definition}
  \label{def:view-d}
  Let $d \in \mathcal{D}$ be a domain.
  The security view on a workflow system $ES_W$ for $d$
  is defined as $\view_d = \left( V_d, N_d, C_d \right)$, where
  \begin{align*}
    V_d = & \bigcup_{cl_{\activities}(a) \leadsto d} E_a \\
    C_d = & \left\{ Setval_a(u, i, val) \mid \exists u \in U, i \in \documents, v \in Val. \; dom(i) \not \leadsto d \land cl_{\activities}(a) \not \leadsto d \right\} \\
       \cup & \left\{ Outval_a(u, i, val) \mid \exists u \in U, i \in \documents, v \in Val. \; dom(i) \not \leadsto d \land cl_{\activities}(a) \not \leadsto d \right\} \\
    N_d = & E \setminus (V_d \cup C_d)
  \end{align*}
\end{definition}

The set $C_d$ contains the confidential input and output events.\footnote{The
  set $C_d$ only contains events of activities $a$ with $cl_{\activities}(a)
  \not \leadsto d$, because activities with $cl_{\activities}(a) \leadsto d$
are considered to be visible, and the set of visible and confidential events
must be disjoint.} Note that we assume that confidential information enters the
system only via user input or output, and that the system does not generate
confidential information by itself (e.g.\ by generating cryptographic key
material). If that were the case, the corresponding system events would have to
be added to the set $C_d$.  Moreover, it is worth pointing out that we consider
certain other types of information to be \emph{non-confidential}. In
particular, the information whether an activity has been performed or not, or
the information which user has performed which activity is considered to be
non-confidential.  Again, such requirements could be captured by formulating
the security view accordingly.  For our setting, the above view reflects our
security requirement that the values of confidential data items should be kept
secret.  Hence, we use this view for the rest of this paper.

In Section~\ref{sec:verification}, we describe how to verify that a given
workflow system satisfies the security predicate $BSD_{\view_d} \land
BSIA_{\view_d}$ with respect to this view for every domain $d$. It expresses
that confidential user inputs and outputs can be deleted or inserted without
interfering with the observations of users in domain $d$.

\subsection{Separation of Duties}
\label{sec:sod}

As discussed in the introduction, separation of duties is another common
security requirement in workflow management systems. Separation of duties can
be formally defined as a safety property \cite{alpern_recognizing_1987}. The
``bad thing'' happens when the same user participates in two activities
constrained by separation of duty, hence we only allow traces where this does
not occur.

\begin{definition}
  \label{def:sod}
  Let $a, a' \in \activities$ be two activities.
  We call the set of traces
  \[\left\{ \alpha \in E_W^* \mid \forall u, u' \in U.\; \forall e_1, e_2 \in \alpha. (e_1 \in (E_a \cap E_u) \land e_2 \in (E_{a'} \cap E_{u'})) \rightarrow u \neq u' \right\}\]
  a separation-of-duty property $P_{SoD}^{a, a'}$.
\end{definition}

As we have modelled user assignment explicitly as events, this property can
also be characterised by requiring that
\begin{inparaenum}
  \item constrained activities are assigned to different users, and
  \item users may participate in an activity only after they have been assigned
    to it:
\end{inparaenum}
\begin{align*}
  P_{SoD}^{a, a'} \supseteq & \left\{ \alpha \in E_W^* \mid \forall u, u' \in U.\; Start_{a_1}(u) \in \alpha \land Start_{a_2}(u') \in \alpha \longrightarrow u \neq u' \right\} \\
  & \cap \left\{ \alpha \in E_W^* \mid \forall a \in \activities, u \in U, e \in (E_a \cap E_u).\; Start_a(u) \notin \alpha \longrightarrow e \notin \alpha \right\}
\end{align*}
A system with a set of traces $Tr$ and events $E$ satisfies such a property iff
$Tr \subseteq P_{SoD}^{a, a'}$.  In our example workflow, there are separation
of duty constraints between the activities belonging to the two medical
examinations ($T6$--$8$ in Figure~\ref{fig:workflow} one the one hand, and
$T9$--$11$ on the other hand). Hence, we want to enforce $P_{SoD}^{a, a'}$ for
the pairs $(a, a') \in \left\{ T6, T7, T8 \right\} \times \left\{ T9, T10, T11
\right\}$.

Similarly, other runtime-enforceable security policies
\cite{schneider_enforceable_2000} can be modelled as safety properties. In this
paper, we focus on the above notion of separation of duty as an example and
investigate its relation to information flow in Section~\ref{sec:safety}.

\section{Verification}
\label{sec:verification}

\subsection{Information Flow Security}
\label{sec:noninterference}

To ease the verification of the security of a workflow system, we decompose it
into the individual activities of the workflow and make use of the methodology
presented in \cite{hutter_security_2007} to verify the resulting distributed
system.  For each domain $d \in \mathcal{D}$, we verify that users in that
domain can learn nothing about information that is confidential for them.  The
first step of the methodology \cite{hutter_security_2007} is to partition the
activities into a set of low activities $\observers = \{ a \in \activities \mid
cl_{\activities}(a) \leadsto d \}$ that are (potentially) visible in domain
$d$ and a set of high activities $\friends = \{ a \in \activities \mid
cl_{\activities}(a) \not \leadsto d \}$ that are not visible and may handle
confidential information.\footnote{In \cite{hutter_security_2007}, the set
$\observers$ is called the set of observers, while $\friends$ is called the
set of friends. This might be a bit counterintuitive in our setting for
some readers, as the friends would be the activities that are \emph{not}
visible. To avoid confusion, we simply speak of low and high activities,
respectively.} It follows that $\view_d$ from Definition \ref{def:view-d} is a
global security view as defined in \cite[Definition 13]{hutter_security_2007},
i.e.\ the visible events are exactly the events of the low activities, the set
of confidential events is a subset of the events of the high activities, and
the remaining events are non-visible and non-confidential.

The second step is finding suitable local views $\view_d^a$ for high activities
$a \in \friends$ in order to verify that they do not leak confidential
information to low activities.  Hence, we cannot generally treat communication
events of these activities as $N$-events, as we did in the global view, but we
have to consider some of them as $V$-events (e.g.\ a high activity sending a
trigger or a declassified data item to a low activity) and some of them as
$C$-events (e.g.\ a high activity receiving a confidential data item).
Intuitively, this means we split each of these activities into a part that
visibly interacts with low activities and a part that handles confidential
data, and verify that the latter does not interfere with the former.
Technically, these local views satisfy certain constraints that allow us to
instantiate the compositionality result of \cite{hutter_security_2007}, as we
discuss below.
\begin{definition}
  Let $d \in \mathcal{D}$ be a domain, and $a \in \friends$ be a high activity
  for $d$.  Furthermore, let $\documents_d^C = \{i \in \documents \mid dom(i)
  \not \leadsto d\}$ denote the set of data items that are confidential for
  $d$. The \emph{local view for $a$} is defined as $\view_d^a = (V_d^a, N_d^a,
  C_d^a)$ with
  \begin{align*}
    V_d^a & = (I_a \cup O_a) \setminus \bigcup_{i \in \documents_d^C} E_i \\
    C_d^a & = \bigcup_{i \in \documents_d^C} \left( E_i \setminus \left\{ Send_a(b, m) \mid \exists v. \; m = Data(i, v) \lor m = AckData(i) \right\} \right) \\
    N_d^a & = E_a \setminus (V_d^a \cup C_d^a)
  \end{align*}
  where the set $E_i$ of high communication events containing data item $i$ is
  defined as
  \begin{align*}
  E_i = \big\{ e \mid & \exists b \in \friends, m, u, v. \; \left( m = Data(i, v) \lor m = AckData(i) \right) \\
  & \land \left( e = Send_a(b, m) \lor e = Recv_a(b, m) \lor e = Setval_a(u, i, v) \lor e = Outval_a(u, i, v) \right) \big\}
  \end{align*}
  Combining these local views, we define the \emph{composed view for $d$} as
  $\view_{d^+} = (V_{d^+}, N_{d^+}, C_{d^+})$ where

  \begin{tabular*}{0.95\textwidth}{@{\extracolsep{\fill} } lcr}
    $V_{d^+} = \bigcup_{a \in \friends} V_d^a \; \cup \; \bigcup_{a \in \observers} E_a$
    & $C_{d^+} = \bigcup_{a \in \friends} C_d^a$
    & $N_{d^+} = E_W \setminus (V_{d^+} \cup C_{d^+})$
  \end{tabular*}
\end{definition}

Note that the combined view $\view_{d^+}$ is \emph{stronger} than our global
view $\view_d$ in the sense that more events are considered confidential or
visible for an observer in domain $d$. Theorem 1 of
\cite{mantel_composition_2002} tells us that $BSD_{\view_{d^+}} \land
BSIA_{\view_{d^+}}$ for the stronger view implies $BSD_{\view_d} \land
BSIA_{\view_d}$.

Also note that all communication events with low activities are considered
visible, and that the forwarding of confidential data items from one high
activity to another is considered non-confidential. The justification for this
is that secrets enter and leave the subsystem of high activities through
communication with users and low activities, and the forwarding between high
activities can be considered as internal processing. Hence, we can use
communication events between high activities for correcting perturbations
caused by inserting or removing confidential user inputs.  We make use of this
fact in the proof of the following theorem, which states the security of
activities as we have specified them in Appendix~\ref{app:spec} in terms of the
transition relations $T_a^{gen}$, $T_a^{user}$ and $T_a^{gw(Cond)}$.

\begin{theorem}
  Let $W$ be a workflow, $d \in \mathcal{D}$ a domain and $SES_a$ for $a \in
  \activities$ an activity. If the transition relation of $SES_a$ is
  \begin{compactitem}
    \item $T_a^{gen} \cup T_a^{user}$, or
    \item $T_a^{gen} \cup T_a^{gw(Cond)}$ and $Cond$ does not depend on
      confidential data for $d$,
  \end{compactitem}
  then $BSD_{\view_d^a}(Tr_a) \land BSIA_{\view_d^a}(Tr_a)$ holds.
  \label{thm:local-security}
\end{theorem}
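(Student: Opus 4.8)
The plan is to verify $BSD_{\view_d^a}$ and $BSIA_{\view_d^a}$ directly from the structure of the transition relations $T_a^{gen}$, $T_a^{user}$, and $T_a^{gw(Cond)}$ given in Appendix~\ref{app:spec}. Both BSPs are closure properties: given a trace containing (resp.\ admitting) a confidential event, one must exhibit a perturbed trace agreeing on $V$-events, containing no $C$-events, and still enabled from $s_a^0$. So the core task is, for each confidential event $c \in C_d^a$, to describe explicitly how to repair the trace after deleting it ($BSD$) or inserting it ($BSIA$), using only $N$-events — in particular using $Send$/$Recv$ communication events with other high activities, which the composed-view discussion tells us are treated as $N$-events at the local level.

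First I would unfold $C_d^a$: by definition its events all concern a confidential data item $i \in \documents_d^C$, and are either $Setval_a(u,i,v)$, $Outval_a(u,i,v)$, $Recv_a(b, Data(i,v))$, $Recv_a(b, AckData(i))$, or $Send_a(b, Data(i,v))$ / $Send_a(b, AckData(i))$ with $b$ a \emph{high} activity — the last two being excluded, so all outgoing sends of confidential data are $V$-events and must therefore be preserved, not deleted. The key observation is that in $T_a^{gen} \cup T_a^{user}$ the confidential data items $i \in \documents_d^C$ are \emph{segregated}: their values are read/written only through these $C_d^a$-events, and the reachable control-flow state (program counter, triggers sent, which low messages have been emitted) does not branch on the contents of $Mem(i)$ for such $i$. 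Hence I would prove a simulation-style lemma: for any reachable state $s$ and any alternative memory $Mem'$ that agrees with $Mem(s)$ off $\documents_d^C$, the state $s'$ obtained by substituting $Mem'$ is also reachable via a trace with the same $V$-projection and no $C$-events. $BSD$ then follows by taking $Mem'$ to be the memory that would have resulted had $c$ never occurred (deleting the matching $Send_a(b,\cdot)$ to high $b$ if $c$ was a $Recv$, which is legitimate since those sends are $N$-events), and $BSIA$ follows symmetrically, using the admissibility hypothesis ($\beta.c \in Tr_a$) to know the insertion point is control-flow-consistent, then propagating the effect of $c$ forward through $N$-events only.

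For the gateway case $T_a^{gen} \cup T_a^{gw(Cond)}$ the only new wrinkle is that the outgoing trigger depends on $Cond$ evaluated on the input data; but the hypothesis ``$Cond$ does not depend on confidential data for $d$'' means $Cond$ is a function of $Mem$ restricted to non-confidential items, so the segregation lemma still applies — perturbing confidential memory does not change which branch is taken, hence does not change any $V$-event. The main obstacle I anticipate is the bookkeeping in the simulation lemma: one must be careful that deleting or inserting a confidential $Recv$ (or $Setval$/$Outval$) can be matched by a corresponding deletion/insertion of $N$-events \emph{earlier or later} in the trace without violating prefix-closedness or the precise enabledness conditions of $T_a^{gen}$ (acknowledgement handshakes, message ordering), and that $BSD$/$BSIA$ quantify over splitting the trace as $\beta.c.\alpha$ with $\alpha|_C = \langle\rangle$, so the repair on the $\alpha$-part must leave its $V$-projection pointwise unchanged. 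Making this induction on $|\alpha|$ precise — tracking exactly which auxiliary $N$-events are added or removed at each step — is where the real work lies; everything else is a routine case split over the event kinds.
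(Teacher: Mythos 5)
Your plan is, in substance, the proof the paper gives: the paper formalises your ``segregation'' idea as an unwinding relation $\approx_d$ (agreement on $pc$, on $Mem$ restricted to non-confidential items, on the entries of $MQueue$ and $AQueue$ whose receiver is low or whose data item is non-confidential, and on $SQueue$, $TriggeredBy$, $User$) and discharges Mantel's unwinding conditions $osc$, $lrf$, $lrb$, which yield $BSD$ and $BSIA$ in one stroke; you propose to run the corresponding simulation induction on traces by hand. Two points in your write-up need repair before that induction can be carried out. First, a classification slip: $Send_a(b,Data(i,v))$ and $Send_a(b,AckData(i))$ with $i \in \documents_d^C$ and $b$ high are removed from $V_d^a$ as well as from $C_d^a$ (recall $V_d^a = (I_a \cup O_a) \setminus \bigcup_{i} E_i$), so they are $N$-events, not $V$-events as you first assert. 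Your later repair step (deleting the matching $Send$ when a confidential $Recv$ is deleted, ``legitimate since those sends are $N$-events'') relies on exactly the correct reading; if those sends really were visible, $BSD$ would fail, since deleting a confidential $Recv_a(b,Data(i,v))$ removes $(b,i)$ from $AQueue$ and disables the matching $AckData$ send, whose forced disappearance would change the $V$-projection.

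Second, your key lemma is too weak as stated. Substituting only the memory off $\documents_d^C$ ignores that deleting or inserting a confidential $Recv$ also changes $AQueue$, and that $\tau_a^{SendData}$ puts confidential entries into $MQueue$; the preconditions of $\tau_a^{Active}$ ($AQueue=\emptyset$) and $\tau_a^{SendTriggers}$ ($MQueue=AQueue=\emptyset$) are sensitive to precisely these components. The invariant must therefore also relate the queues (restricted to low receivers and non-confidential items), and these two $\tau$ cases are repaired by inserting a sequence of $N$-event $Send_a(b,\cdot)$ events and, for $\tau_a^{SendTriggers}$, a $\tau_a^{AckTimeout}$, to flush the residual confidential entries --- this is the content of the paper's $osc$ cases and exactly the ``bookkeeping'' you defer. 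Note also that the lemma should be phrased as a forward simulation along the suffix $\alpha$ from two related states reached after the fixed prefix ($\beta.c$ versus $\beta$, or conversely for $BSIA$), not as reachability from $s_a^0$ by a $C$-free trace, since $BSD$/$BSIA$ keep $\beta$ (which may contain confidential events) unchanged. With the relation strengthened in this way your induction goes through and essentially coincides with the paper's unwinding argument; the gateway case is then handled exactly as you say, because $Cond$ independent of confidential data means related states enable the same outgoing $Trigger$.
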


\begin{techreport}
  The proof of this and the following theorems can be found in
  Appendix~\ref{app:proofs}.
\end{techreport}
\begin{paper}
  The proof of this and the following theorems can be found in the extended
  version of this paper \cite{Bauereiss2013}.
\end{paper}
We use the unwinding technique \cite{DBLP:conf/esorics/Mantel00} for the proof.
Note that since the generic transition relation $T_a^{gen}$ and the
activity-specific transition relations are disjoint, we can partition this
proof into a generic part that covers the events and states used in
$T_a^{gen}$, and an activity-specific part.  Therefore, if we want to use a
different kind of activity than the ones specified in this paper, and we reuse
the generic part $T_a^{gen}$ of the transition relation, then we can also reuse
most of this proof.

The next step is to instantiate the compositionality result of
\cite{hutter_security_2007}, which states that the security of the overall
system with respect to the global security view is implied by the security of
the subsystems with respect to their local views. However, our local views do
not quite satisfy the requirement of being \emph{$C$-preserving} in the sense
of Definition 18 of \cite{hutter_security_2007}, because that definition
disallows $N$-events in the communication interface between subsystems.  Hence,
we slightly adapt the notion C-preserving views, allowing $Send$ events to be
in $N$:

\begin{definition}
  Let $\friends \subseteq \activities$ and $C \subseteq E_{\friends}$. A family
  $\left( \view^a \right)_{a \in \friends}$ of views $\view^a = (V_a, N_a,
  C_a)$ for $E_a$ is \emph{C-preserving for $C$} iff
  \begin{enumerate}
    \item $a \in \friends$ and $b \notin \friends$ implies $\forall m. Send_a(b, m) \in V_a  \land Recv_a(b, m) \in V_a$.
    \item $a, a' \in \friends$ implies
      \begin{enumerate}
	\item $Recv_{a'}(a, m) \in C_{a'}$ iff $Send_a(a', m) \not \in V_a$ and
	\item $Recv_{a'}(a, m) \in V_{a'}$ iff $Send_a(a', m) \in V_a$ 
      \end{enumerate}
    \item $C \cap E_a \subseteq C_a$ for all $a \in \Phi$.
  \end{enumerate}
\end{definition}

As can be easily seen, our local views are C-preserving for the set of global
confidential events $C_d$ from Definition \ref{def:view-d}: communication with
low activities is visible, corresponding $Recv$ and $Send$ events are either
visible or non-visible (where non-visible $Recv$ events need to be
confidential, while the corresponding $Send$ events are allowed to be treated
as $N$-events), and events that are confidential in the global view are
confidential for the local views.

It turns out that the compositionality result of \cite{hutter_security_2007}
still holds for our weakened notion of C-preserving local views; a sufficient
(but not necessary) condition is that the subsystems satisfy not only $BSD$ (as
in \cite{hutter_security_2007}), but $BSD$ and $BSIA$, which our activities
happen to do.

\begin{theorem}
  Let $W$ be a workflow, $ES_W = \left( \|_{a \in \activities} ES_a \right) \|
  ES_P$ be a workflow system, $\view_d$ be a global security view for domain
  $d$, and $\left( \view_d^a \right)_{a \in \friends}$ be a family of local
  views that is C-preserving for $C_d$. If for all $a \in \friends$, $ES_a$
  satisfies $BSD_{\view_d^a} \land BSIA_{\view_d^a}$, then $ES_W$ satisfies
  $BSD_{\view_{d^+}} \land BSIA_{\view_{d^+}}$ and, therefore, $BSD_{\view_d}
  \land BSIA_{\view_d}$.
  \label{thm:comp-security}
\end{theorem}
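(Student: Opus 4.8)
The plan is to reduce Theorem~\ref{thm:comp-security} to the compositionality result of \cite{hutter_security_2007}, re-running its proof with the weakened C-preserving condition in hand. First I would recall the structure of the original argument: one shows $BSD$ (and here additionally $BSIA$) for the composed view by taking a trace $\alpha \in Tr(ES_W)$ with a confidential event to delete (resp.\ to insert), projecting it onto each component alphabet $E_a$ to obtain local traces $\alpha|_{E_a} \in Tr(ES_a)$, and applying the local $BSD_{\view_d^a}$ / $BSIA_{\view_d^a}$ witnesses component-by-component. The subtlety is that the local corrections introduce or remove $N$-events, and one must check that the corrected local traces still \emph{interleave} to a global trace, i.e.\ that they remain synchronised on shared events. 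In \cite{hutter_security_2007} this works because C-preserving views force the shared communication events ($Send$/$Recv$ pairs between components) to be either both visible or both confidential, so a correction on one side is mirrored on the other. In our setting a shared $Send_a(b,m)$ may be an $N$-event on the sending side while the matching $Recv_b(a,m)$ is a $C$-event on the receiving side, so this mirroring breaks.

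The key observation to rescue the argument is that $N$-events in the communication interface are precisely the $Send$ events \emph{from} a high activity, and these are under the control of the sender: when we delete a confidential user input to some high activity $a$ using $BSD_{\view_d^a}$, the resulting local trace may also drop some $Send_a(b,m)$ events (they are $N$-events and may be "adapted to correct the deletion"). On the receiving side $b$, the matching $Recv_b(a,m)$ is confidential, so $BSD_{\view_d^b}$ (applied to remove that $Recv$, which is now a confidential event we are free to delete) yields a corrected local trace of $b$ in which that $Recv$ is gone and only $N$-events and visible-equivalent content remain. This is exactly where we need $BSD$ \emph{and} $BSIA$ rather than $BSD$ alone: when a correction on the sender side instead \emph{adds} a $Send_a(b,m)$ (again an $N$-event), we need to insert the corresponding confidential $Recv_b(a,m)$ into $b$'s trace, which is an insertion of a confidential event — available via $BSIA_{\view_d^b}$, provided that $Recv$ is admissible at that point. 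So the plan is: proceed by an induction that peels confidential events off $\alpha$ one at a time (backwards-strict, so from the right), and at each step propagate the perturbation through the component graph, using $BSD$ to absorb deletions of interface traffic and $BSIA$ to absorb insertions, terminating because each propagation step strictly decreases the length of the as-yet-uncorrected suffix and the message-flow structure is acyclic enough / the platform $ES_P$ merely forwards messages without constraint.

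Concretely the steps are: (1) unfold $BSD_{\view_{d^+}}$: fix $\beta . c . \gamma \in Tr(ES_W)$ with $\gamma|_{C_{d^+}} = \langle\rangle$ and $c \in C_{d^+}$; since $C_{d^+} = \bigcup_{a\in\friends} C_d^a$, the event $c$ lies in exactly one $E_a$ with $a \in \friends$ (disjointness of the partition), so $c \in C_d^a$. (2) Apply $BSD_{\view_d^a}$ to the projection $(\beta . c . \gamma)|_{E_a}$ to get a corrected local trace $\delta_a$ of $a$ with $\delta_a|_{V_d^a} = ((\beta.\gamma)|_{E_a})|_{V_d^a}$, $\delta_a|_{C_d^a} = \langle\rangle$. (3) Compare $\delta_a$ with $(\beta.\gamma)|_{E_a}$ on the shared interface: the differences are confined to $N_d^a$-events, which on the interface are $Send_a$ events carrying confidential data items of $d$; for each such discrepancy, use $BSD$ or $BSIA$ on the neighbouring component's local view to match it, noting that the matching $Recv$ is confidential there. (4) Show the resulting family of corrected local traces interleaves into a single $\alpha' \in Tr(ES_W)$ with $\alpha'|_{V_{d^+}} = (\beta.\gamma)|_{V_{d^+}}$ and $\alpha'|_{C_{d^+}} = \langle\rangle$ — for this one checks synchronisation on shared events (now guaranteed, since every interface discrepancy was matched in step (3)) and that the platform $ES_P$, being unconstrained, accepts the new message schedule. (5) Repeat the entirely symmetric argument for $BSIA_{\view_{d^+}}$, swapping the roles of $BSD$ and $BSIA$ in the propagation. (6) Finally invoke Theorem~1 of \cite{mantel_composition_2002}, as already noted in the text, to pass from $BSD_{\view_{d^+}} \land BSIA_{\view_{d^+}}$ to $BSD_{\view_d} \land BSIA_{\view_d}$.

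The main obstacle is step (3)–(4): bounding and terminating the propagation of perturbations through the network of high activities. One must argue that re-matching a $Send$/$Recv$ discrepancy at component $b$ does not spawn uncontrollably many new discrepancies at $b$'s other neighbours — this is where $BSD$'s and $BSIA$'s \emph{backwards-strict} nature is essential, since it keeps each correction confined to a suffix and lets us order the whole process as a terminating recursion on trace length, and where the simplicity of the forwarding platform (no delivery or ordering guarantees) is what makes the re-interleaving always succeed. I would set up this termination measure carefully and verify that the admissibility side-condition of $BSIA$ is met at each insertion point, which follows because the inserted $Recv_b(a,m)$ is admissible in $b$ exactly when the corresponding $Send_a(b,m)$ occurred in $a$'s corrected trace, and the latter is ensured by the $BSD$/$BSIA$ witness we used at $a$.
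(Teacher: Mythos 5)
There is a genuine gap. The paper does not re-derive the compositionality argument at all: its proof consists of invoking the compositionality theorem of \cite{hutter_security_2007} (whose technical core is Mantel's generalised zipping lemma) and checking that the single thing that changed --- allowing $Send$ events between high activities to be $N$-events in the local views --- still yields a \emph{well-behaved composition}, namely via the second case of Definition 6.3.6 of \cite{DBLP:phd/de/Mantel2004}: the subsystem obtained by composing any set of high activities with the platform is total in the $Send$ events of outside agents and satisfies $BSIA$. That is precisely where the hypothesis $BSD \land BSIA$ (rather than $BSD$ alone, as in \cite{hutter_security_2007}) is consumed. Your proposal instead tries to rebuild the zipping machinery by hand, and the step you yourself flag as ``the main obstacle'' --- re-assembling the corrected local traces into one global trace, including termination of the propagation of $Send$/$Recv$ discrepancies across high activities and the platform --- is exactly the content of that lemma; naming a termination measure you ``would set up carefully'' does not discharge it.

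Moreover, the auxiliary claims you offer to make the propagation plausible do not hold up. The message-flow structure of a workflow need not be acyclic (the running example itself loops back to activity 4), so acyclicity cannot be the reason the process terminates. Backwards-strictness confines each correction to a suffix, but a correction at a neighbour $b$ may introduce \emph{new} $N$-$Send$ discrepancies towards $b$'s other high neighbours inside that same suffix, so ``recursion on trace length'' is not obviously decreasing; controlling this cascade is what the well-behavedness condition is for. Finally, the admissibility side condition of $BSIA_{\view_d^b}$ is a property of $Tr(ES_b)$ alone (whether $\beta.c \in Tr_b$ for the local prefix $\beta$), not of whether the matching $Send_a(b,m)$ occurs in $a$'s corrected trace; whether the platform's forwarding behaviour is consistent with the new message schedule is a separate synchronisation obligation that your step (4) simply asserts. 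Your high-level reading of the view structure (sender-side $Send$ in $N$, receiver-side $Recv$ in $C$, corrections absorbed by the neighbour's $BSD$/$BSIA$) matches the paper's intuition, but as a proof the proposal stops exactly where the real work begins, and the paper's route --- reducing to the existing compositionality theorem and verifying well-behavedness --- is what closes that hole.
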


Note that, if other kinds of activities than the ones from
Appendix~\ref{app:spec} should be part of the workflow, it is only required to
prove that their specifications also satisfy the security predicates for the
local views, in order to show that the overall workflow satisfies the
information flow security predicates.

We have formalised and verified our model and proofs using the interactive
theorem prover Isabelle \cite{nipkow2002isabelle}. Our development is based on
a formalisation of the MAKS framework developed by the group of Heiko Mantel at
TU Darmstadt (unpublished as of this writing). We intend to make our
formalisation publicly available when the MAKS formalisation is released.

Conceptually, the main difference between our workflow management systems and
the shopping mall system described in \cite{hutter_security_2007} lies in the
relation between users and the system. In the shopping scenario, there is a
one-to-one correspondence between users and software agents running in the
system.  Communication with the users happens only during initialisation, when
users write their preferences into the initial memory of their agents, which
run autonomously thereafter. In our workflow systems, the interaction is much
more dynamic, as multiple activities can be assigned to the same user at
runtime and there is ongoing communication between users and the system. This
has impact on the system model --- we introduced additional events for user
interaction --- and the construction of views. The partitioning into high and
low activities is based on classifications of data items and activities, and
access control has to ensure that only users with a matching clearance can
participate in an activity, so that our security views are actually in line
with the possible runtime observations of users. Despite these differences, we
have seen that the methodology of \cite{hutter_security_2007} can be applied
with small technical adjustments.

\subsection{Compatibility with Separation of Duties}
\label{sec:safety}

As described in Section \ref{sec:sod}, we can formalise constraints such as
separation of duty as safety properties. Having established information flow
security of our workflow system, we now ask whether these security properties
are preserved when enforcing separation of duty constraints.  In general, this
is not the case. Altering a system such that it satisfies a safety property can
be seen as a refinement, and it is well-known that possibilistic information
flow security is not preserved under refinement in general
\cite{mantel_preserving_2001}. Consider, for example, the security predicate
$BSIA$. Repeatedly inserting confidential events of different users into a
trace can exhaust the possible user assignments that would satisfy the
separation of duty constraints, thus deadlocking the process and making further
visible observations impossible.  We can, however, try to find sufficient
conditions under which information flow properties are preserved:

\begin{theorem}
  Let $ES = (E, I, O, Tr)$ be an event system and $\view = (V, N, C)$ be a view
  for $ES$. Let $E_a, E_a' \subseteq E$ be two disjoint sets of events
  corresponding to activities $a$ and $a'$, and let $P_{SoD}^{a, a'}$ be an SoD
  property.  Let $E_u \subseteq V \cup C$ be the communication events with a
  user $u$ and $E_U = \bigcup_{u \in U} E_u$ the set of all user events.  If
  \begin{enumerate}
    \item user assignment is non-confidential, i.e.\ there is a set $E^{assign}
      \subseteq E \setminus C$ of assignment events, and a user $u$ may only
      participate in an activity after having been assigned to it via an event
      from $E^{assign} \cap E_u$, or
    \item only confidential or only visible user I/O events of activities $a$
      and $a'$ are enabled in $ES$, i.e.\ there is a set $E^{disabled}
      \subseteq E$ of events that never occur in a trace of $ES$, and $V \cap
      (E_a \cup E_a') \cap E_U \subseteq E^{disabled}$ or $C \cap (E_a \cup
      E_a') \cap E_U \subseteq E^{disabled}$ holds, or
    \item the SoD constraint between $a$ and $a'$ is already enforced by $ES$,
      i.e.\ $Tr \subseteq P_{SoD}^{a, a'}$,
  \end{enumerate}
  then $BSD_{\view}(Tr) \land BSIA_{\view}(Tr)$ implies $BSD_{\view}(Tr \cap
  P_{SoD}^{a, a'}) \land BSIA_{\view}(Tr \cap P_{SoD}^{a, a'})$.
  \label{thm:sod-security}
\end{theorem}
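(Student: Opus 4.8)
The plan is to handle the three cases separately, since they give different reasons why intersecting with $P_{SoD}^{a,a'}$ cannot destroy the closure properties. In all cases I would argue at the level of traces: given $BSD_{\view}(Tr) \land BSIA_{\view}(Tr)$, take a witness trace produced by the $BSD$ or $BSIA$ closure for the unrestricted $Tr$, and show it can be chosen to stay inside $P_{SoD}^{a,a'}$. The key structural observation, which I would establish first as a small lemma, is that $P_{SoD}^{a,a'}$ is determined by which $\mathit{Start}$-events and which user-I/O events appear in a trace; more precisely, using the characterisation given right after Definition~\ref{def:sod}, membership in $P_{SoD}^{a,a'}$ depends only on the projection of the trace onto $E_U \cup \{\mathit{Start}_a(u) \mid a\in\activities, u\in U\}$, and it is downward closed under deleting such events (deleting events can only help satisfy an SoD constraint, never break one). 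Removing a confidential event is therefore always safe for SoD; the only danger is \emph{inserting} a confidential event (the $BSIA$ direction), which is exactly the failure mode sketched in the paragraph before the theorem.

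For $BSD$: suppose $\beta.c.\alpha \in Tr\cap P_{SoD}^{a,a'}$ with $\alpha|_C = \langle\rangle$. Apply $BSD_\view(Tr)$ to get $\alpha'$ with $\alpha'|_V = \alpha|_V$, $\alpha'|_C = \langle\rangle$, $\beta.\alpha' \in Tr$. Since $\beta.c.\alpha \in P_{SoD}^{a,a'}$ and $\beta.\alpha'$ is obtained by deleting $c$ and replacing $\alpha$ by $\alpha'$ --- and $\alpha'$ may only differ from $\alpha$ in $N$-events --- I need that this rearrangement does not introduce a new SoD violation. In case~1, the user-I/O events and $\mathit{Start}$-events lie in $E_U \subseteq V \cup C$, hence are not $N$-events, so $\alpha'|_{E_U} = \alpha|_{E_U}$ restricted to visible user events and $\alpha'|_C=\langle\rangle$ kills the confidential ones; combined with $\beta$'s contribution being unchanged, no new $\mathit{Start}$-pair appears, so $\beta.\alpha' \in P_{SoD}^{a,a'}$. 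Actually deletion-closedness of $P_{SoD}^{a,a'}$ already suffices for the $BSD$ direction in \emph{all three} cases, because the witness $\beta.\alpha'$ has a subsequence relation issue only through the $N$-events it may add — so here I must be slightly careful: I would argue that the only freshly-appearing events in $\beta.\alpha'$ relative to $\beta.c.\alpha$ are $N$-events, and in each of the three hypotheses the $N$-events are disjoint from the events that matter for $P_{SoD}^{a,a'}$ (in case~1 because those are in $V\cup C$; in case~2 because the relevant user-I/O events of $a,a'$ are either all in $C$, hence deleted, or all disabled; in case~3 trivially since $Tr\subseteq P_{SoD}^{a,a'}$ makes the intersection a no-op).

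For $BSIA$: suppose $\beta.\alpha \in Tr\cap P_{SoD}^{a,a'}$, $\alpha|_C=\langle\rangle$, $\beta.c \in Tr\cap P_{SoD}^{a,a'}$, and we must produce $\alpha'$ with $\beta.c.\alpha' \in Tr\cap P_{SoD}^{a,a'}$. Apply $BSIA_\view(Tr)$ to obtain $\alpha'$ with $\beta.c.\alpha' \in Tr$, $\alpha'|_V=\alpha|_V$, $\alpha'|_C=\langle\rangle$. The worry is that $c$ — which may be a confidential user-I/O event of $a$ or $a'$ assigned to some user — together with the $\mathit{Start}$ events already in $\beta$ creates an SoD violation. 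In case~1, $c\in C$ means $c\notin E^{assign}$, and the SoD-relevant content is carried by $\mathit{Start}$/assignment events which are non-confidential; since $\beta.c \in P_{SoD}^{a,a'}$ is assumed, and $\alpha'$ adds no new $\mathit{Start}$ or confidential user events beyond those of $\alpha$ (which were already SoD-compatible in $\beta.\alpha$, and whose $V$-projection is preserved while $C$-projection is emptied), $\beta.c.\alpha'$ stays in $P_{SoD}^{a,a'}$: the $\mathit{Start}$-events of $\beta.c.\alpha'$ are a subset of those of $\beta.\alpha$ together with those of $\beta.c$, both of which are SoD-clean, and the user-I/O events of $a,a'$ that remain are only visible ones, which Definition~\ref{def:sod}'s second characterisation guards behind a preceding $\mathit{Start}$. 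In case~2, either the confidential user-I/O events of $a,a'$ are disabled — so $c$ cannot be one of them, hence $c$ is irrelevant to this particular SoD constraint — or the visible user-I/O events of $a,a'$ are disabled, so after emptying $C$ there are no $(E_a\cup E_a')\cap E_U$ events left in $\alpha'$ at all, and $\beta.c \in P_{SoD}^{a,a'}$ already was. In case~3, $Tr\cap P_{SoD}^{a,a'}=Tr$, so the claim is immediate.

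The main obstacle I anticipate is case~1, and specifically pinning down the bookkeeping that ``inserting $c$ and replacing $\alpha$ by $\alpha'$ introduces no new SoD-relevant event.'' This requires using \emph{both} parts of the post-Definition~\ref{def:sod} characterisation: the $\mathit{Start}$-uniqueness clause and the ``I/O only after $\mathit{Start}$'' clause, together with $E_u \subseteq V\cup C$, to argue that the visible-user-event projection is preserved, the confidential-user-event projection is emptied, and hence every $\mathit{Start}$-pair and every guarded I/O event in the new trace was already present in an SoD-clean trace. The refinement-paradox counterexample from the preceding paragraph — exhausting user assignments by repeated $BSIA$ insertions — is exactly what hypotheses~1–3 are engineered to rule out, so the proof is essentially a verification that each hypothesis blocks that mechanism. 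I would also note that none of the three cases requires any property of $ES$ beyond what is stated, and that the argument is uniform in that it only ever uses the trace-set-level closure definitions of $BSD$ and $BSIA$, not the underlying state-event structure.
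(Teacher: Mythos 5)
Your proposal is correct and follows essentially the same route as the paper's own proof: $BSD$ is preserved because $E_U \subseteq V \cup C$ forces the user-event projections of $\alpha$ and $\alpha'$ to coincide (so deleting $c$ cannot create an SoD violation), and $BSIA$ is handled by the same case split, with case~1 resting on the prior, non-confidential assignment event in $\beta$ witnessing the participation that $c$ would add, case~2 on the disabled confidential or visible user I/O events, and case~3 being trivial. The only differences are presentational (you argue directly on the witness traces and isolate deletion-closedness of $P_{SoD}^{a,a'}$ as a lemma, while the paper argues by contradiction), and your appeal to the post-Definition~\ref{def:sod} characterisation for the ``participation only after assignment'' fact is no looser than the paper's own use of it.
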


In our running example, we can choose $E^{assign} = \left\{ Start_a(u) \mid u
\in U \right\}$ and apply the first case of the theorem for the workflow system
$ES_W$ and a view $\view_{d^+}$, because only the details of the results of the
medical examinations are confidential, not the information who carried out the
examinations.  Furthermore, in case $cl_{\activities}(a) \neq
cl_{\activities}(a')$, the mandatory access control described in
Section~\ref{sec:confidentiality} already enforces SoD statically, so the third
condition also applies.  In general, Theorem~\ref{thm:sod-security} gives us
sufficient conditions for the compatibility of SoD constraints and information
flow properties, taking into account the classifications of events that are
relevant for enforcing SoD.  Similar results could be developed for other
classes safety properties that are of interest in workflows, but we leave this
as future work.  Note that Theorem~\ref{thm:sod-security} is not specific to
workflow systems as specified in this paper. It can be applied to any system
where users perform different activities in the presence of separation of duty
constraints.

\section{Related Work}
\label{sec:related}

We build upon the MAKS framework for possibilistic information flow control
\cite{mantel_possibilistic_2000}, which is suitable for formulating and
verifying information flow policies at the specification level.  We have
focused on confidentiality of data from unauthorised employees within the
organisation, but in principle information flow control can be adapted to
different attacker models and security policies by choosing the security views
appropriately.  Furthermore, approaches have been proposed to take into account
factors such as communication over the Internet \cite{hutter_preserving_2007}
or encrypted communication channels \cite{hutter_possibilistic_2004}.  In
\cite{osborn_configuring_2000}, a connection between role-based access control
(RBAC) and mandatory access control is drawn, which might be adapted to enforce
the mandatory access control we described in Section~\ref{sec:confidentiality}
using RBAC mechanisms.

Early examples for workflow management systems with distributed architectures
include
\cite{DBLP:journals/dpd/AlonsoGKAAM96,muth_centralized_1998,schuster_client/server_1994}.
Later, computing paradigms with a similar spirit have emerged, e.g.
service-oriented architectures or cloud computing. We see these techniques and
standards as complementary to our work, as they can be used for the
implementation of our abstract specifications.

BPMN extensions to annotate business process diagrams with security annotations
can be found in
\cite{brucker.ea:securebpmn:2012,DBLP:journals/ieicet/RodriguezFP07,DBLP:journals/re/WolterM10}.
Closest to the security requirements considered by us comes the notation
proposed in \cite{DBLP:journals/re/WolterM10} that supports both the annotation
of activities with separation of duty constraints and the annotation of
documents and process lanes with confidentiality and integrity classifications
or clearances, respectively.

Several proposals for a formal semantics of workflow specifications can be
found in the literature. For example, \cite{wong_process_2008} maps BPMN
diagrams to CSP processes and describes how the formal semantics can be
leveraged to compare and analyse workflow diagrams, e.g.\ with respect to
consistency. It focuses on the control flow and does not model data flows.  In
\cite{yang_information_2010}, workflows are represented as statements in a
workflow description language, which is mapped to a representation as
hierarchical state machines. An information flow analysis algorithm is
described, but the actual information flow property that it checks is not
stated in a declarative, mechanism-independent way.
\cite{DBLP:conf/bpm/AccorsiL12} represents workflows as Petri nets and
describes an approach for information flow analysis. The focus is on keeping
the occurrence of tasks confidential, whereas our work focuses on the
confidentiality of the data that is processed in the workflow.  In
\cite{arsac_security_2011} and \cite{schaad_model-checking_2006}, workflows are
formalised as transition systems and model-checking is employed to verify
properties specified as LTL formulas. This is suitable to verify safety or
liveness properties, whereas the information flow predicates considered by us
can be seen as hyperproperties \cite{clarkson_hyperproperties_2010}.

\section{Conclusion}
\label{sec:conclusion}

Graphical notations such as BPMN are widely used for workflow specification. We
have presented an approach to formally model both the behaviour of a workflow
and the associated security requirements, and described how to apply the
decomposition methodology of \cite{hutter_security_2007} and how to verify a
distributed workflow management system with ongoing user interaction. We have
shown that, even though possibilistic information is in general not
refinement-closed, the enforcement of separation of duty is compatible with the
information flow security of the system under certain assumptions.

We have sketched how a simple version of our example workflow can be
represented as a composition of instantiations of the activity types specified
in Appendix~\ref{app:spec}.  As we have shown the security of these activities
in Theorem~\ref{thm:local-security}, we can use Theorem~\ref{thm:comp-security}
to derive the security of the composed system from the security properties of
the individual activities.  This demonstrates how instantiations of a type of
activities that has been proven secure once can be plugged into larger
workflows in a secure way.  Hence, we believe that this compositional approach
can help in making verification techniques for information flow scale to larger
workflow systems.  However, more work is needed before this approach can
actually be applied to realistic systems.  For example, tool support for
translating a more realistic subset of BPMN to our system model would be a
major step in this direction, which would also help us evaluate our approach
with a sample of existing workflows.

Moving from an abstract specification towards the implementation level is
another important direction of future work.  This paper deals with workflows on
a high level of abstraction.  We intend to work on notions of
security-preserving refinement that allow us to expand abstract activities in a
workflow into more concrete subprocesses and refine the behaviour of atomic
activities towards an executable implementation.  There is a large body of
existing work that we can build upon for this purpose, such as action
refinement for replacing atomic events on the abstract level with sequences of
more concrete events \cite{hutter_possibilistic_2006}, switching between
event-based and language-based notions of information flow
\cite{DBLP:journals/jcs/MantelS03}, or directly generating executable code from
specifications \cite{haftmann_code_2007}.  In the long term, we hope that these
decomposition and refinement techniques will contribute to making the step-wise
development of secure workflow systems from workflow diagrams to executable
code more scalable and efficient.

\paragraph{Acknowledgements} We thank Richard Gay, Sylvia Grewe, Steffen Lortz,
Heiko Mantel and Henning Sudbrock for providing a formalisation of the MAKS
framework in Isabelle/HOL that allowed us to verify our main results in
Isabelle, and the anonymous reviewers for helpful comments on the paper.

\bibliographystyle{eptcs}
\bibliography{references}

\appendix

\section{Specification of Activities}
\label{app:spec}

In this appendix, we give a formal specification of the behaviour of our
activities using PP-statements. In this formalism, the transition relation of a
state-event system is specified by listing pre- and post-conditions on the
state for each event (see Section 2.1 of \cite{hutter_security_2007} for a
formal semantics).

\algnewcommand\PP[2]{\item[#1;] \mbox{\textbf{affects:} #2}}
\algnewcommand\Pre{\item[\textbf{Pre: }]}
\algnewcommand\Post{\item[\textbf{Post: }]}

\begin{figure}[htbp]
  \hrule
  \begin{multicols}{2}
    \parbox{0.45\textwidth}{
      \begin{algorithmic}
	\PP{$Recv_a(b, Data(i, v))$}{$Mem$, $AQueue$}
	\Pre $pc = 0$, $(b, i, a) \in MF$, $(b, i) \notin AQueue$
	\Post $Mem'(i) = v$, $AQueue' = AQueue \cup \{(b, i)\}$
	\Statex
      \end{algorithmic}
    }

    \parbox{0.45\textwidth}{
      \begin{algorithmic}
	\PP{$Send_a(b, AckData(i))$}{$AQueue$}
	\Pre $pc = 0$, $(b, i) \in AQueue$
	\Post $AQueue' = AQueue \setminus (b, i)$
	\Statex
      \end{algorithmic}
    }

    \parbox{0.45\textwidth}{
      \begin{algorithmic}
	\PP{$Recv_a(b, Trigger)$}{$TriggeredBy$}
	\Pre $pc = 0$
	\Post $TriggeredBy' = b$
	\Statex
      \end{algorithmic}
    }

    \parbox{0.45\textwidth}{
      \begin{algorithmic}
	\PP{$\tau_a^{Active}$}{$pc$}
	\Pre $pc = 0$, $TriggeredBy \neq \bot$, $AQueue = \emptyset$
	\Post $pc' = 1$
	\Statex
      \end{algorithmic}
    }

    \parbox{0.45\textwidth}{
      \begin{algorithmic}
	\PP{$\tau_a^{SendData}$}{$pc$, $MQueue$}
	\Pre $pc = 2$
	\Post $pc' = 3$, $MQueue' = \left\{ (b, i) \mid (a, i, b) \in MF \land Mem(i)
	  \neq \bot \right\}$
	\Statex
      \end{algorithmic}
    }

    \parbox{0.45\textwidth}{
      \begin{algorithmic}
	\PP{$Send_a(b, Data(i, v))$}{$MQueue, AQueue$}
	\Pre $pc = 3$, $(b, i) \in MQueue$
	\Post $MQueue' = MQueue \setminus \{(b, i)\}, AQueue' = AQueue \cup \{(b, i)\}$
	\Statex
      \end{algorithmic}
    }

    \parbox{0.45\textwidth}{
      \begin{algorithmic}
	\PP{$Recv_a(b, AckData(i))$}{$AQueue$}
	\Pre $pc = 3$, $(b, i) \in AQueue$
	\Post $AQueue' = AQueue \setminus \{(b, i)\}$
	\Statex
      \end{algorithmic}
    }

    \parbox{0.45\textwidth}{
      \begin{algorithmic}
	\PP{$\tau_a^{AckTimeout}$}{$AQueue$}
	\Pre $pc = 3$ 
	\Post $AQueue' = \emptyset$
	\Statex
      \end{algorithmic}
    }

    \parbox{0.45\textwidth}{
      \begin{algorithmic}
	\PP{$\tau_a^{SendTriggers}$}{$pc$, $SQueue$}
	\Pre $pc = 3$, $MQueue = \emptyset$, $AQueue = \emptyset$
	\Post $pc' = 4$, $SQueue' = \left\{ b \mid (a, b) \in SF \right\}$
	\Statex
      \end{algorithmic}
    }

    \parbox{0.45\textwidth}{
      \begin{algorithmic}
	\PP{$Send_a(b, Trigger)$}{$SQueue$}
	\Pre $pc = 4$, $b \in SQueue$
	\Post $SQueue' = SQueue \setminus \{b\}$
      \end{algorithmic}
    }
  \end{multicols}
  \hrule
  \caption{PP-statements of generic transition relation $T_a^{gen}$}
  \label{alg:generic-activity}
\end{figure}

We specify the behaviour of our activities in two parts. The PP-statements in
Figure~\ref{alg:generic-activity} specify the \emph{generic} part of the
behaviour of activities, i.e.\ the communication with other activities in order
to exchange data items and trigger sequence flows. For this purpose, it
maintains program variables $MQueue$ (which data items still have to be sent),
$AQueue$ (which data items still have to be acknowledged), $SQueue$ (which
triggers still have to be sent), $TriggeredBy$ (whether and from where a
trigger has been received), and $User$ (to which user this activity is
assigned). The program counters $0$, $3$ and $4$ correspond to the phases of
waiting for inputs and triggers, sending outputs, and sending triggers,
respectively.

When the program counter reaches $1$, an \emph{activity-specific} transition
relation takes over in order to perform the actual activity.  In our simple
example workflow, we only need two kinds of activities, namely user
input/output and gateways (deciding on the control flow based on a condition
$Cond$ on input data). The latter continues the workflow with that activity $b$
for which $Cond(b, Mem)$ evaluates to true. These two kinds of activities are
specified in Figures \ref{alg:user-activity} and \ref{alg:gateway},
respectively.  We denote the transition relations induced by the PP-statements
in Figures \ref{alg:generic-activity}, \ref{alg:user-activity}, and
\ref{alg:gateway} as $T_a^{gen}$, $T_a^{user}$, and $T_a^{gw(Cond)}$,
respectively.  The overall transition relation of an activity is the union of
$T_a^{gen}$ and an activity-specific transition relation.

\begin{figure}[htbp]
  \hrule
  \begin{multicols}{2}
    \parbox{0.45\textwidth}{
      \begin{algorithmic}
	\PP{$Start_a(u)$}{$User$}
	\Pre $pc = 1$, $User = \bot$, $cl_U(u) = cl_{\activities}(a)$
	\Post $User' = u$
	\Statex
      \end{algorithmic}
    }

    \parbox{0.45\textwidth}{
      \begin{algorithmic}
	\PP{$Setval_a(u, i, v)$}{$Mem$}
	\Pre $pc = 1$, $User = u$
	\Post $Mem'(i) = v$
	\Statex
      \end{algorithmic}
    }

    \parbox{0.45\textwidth}{
      \begin{algorithmic}
	\PP{$Outval_a(u, i, v)$}{}
	\Pre $pc = 1$, $User = u$, $Mem(i) = v$
	\Statex
      \end{algorithmic}
    }

    \parbox{0.45\textwidth}{
      \begin{algorithmic}
	\PP{$End_a(u)$}{$pc$}
	\Pre $pc = 1, User = u$
	\Post $pc' = 2$
      \end{algorithmic}
    }
  \end{multicols}
  \hrule
  \caption{PP-statements of transition relation $T_a^{user}$ for user activities}
  \label{alg:user-activity}
\end{figure}

\begin{figure}[htbp]
  \hrule
  \begin{algorithmic}
    \PP{$Send_a(b, Trigger)$}{$pc$}
    \Pre $pc = 1$, $Cond(b, Mem) = \top$, $(a, b) \in SF$
    \Post $pc' = 5$
  \end{algorithmic}
  \hrule
  \caption{PP-statement of transition relation $T_a^{gw(Cond)}$ for gateways}
  \label{alg:gateway}
\end{figure}

After completion of the activity has been signalled by setting the program
counter to $2$, the generic transition relation takes control again and starts
sending output data items to the designated receivers. It makes sure that they
have been received by waiting for acknowledgements, and afterwards proceeds by
sending triggers to the successor activities in the workflow. An exception to
this rule is if a receiver fails to send an acknowledgement; in this case the
$\tau_a^{AckTimeout}$ event can be used to signal a timeout and proceed with
the workflow. This is important for security, because otherwise a confidential
activity could block the progress of the workflow by refusing to acknowledge a
data item.

Of course, other modelling decisions are possible to solve this problem. As an
alternative, we have also modelled and verified a system specification where
the communication platform guarantees causal delivery of messages, i.e.
messages from one activity to another are always received in the order that
they are sent. This would make acknowledgements unnecessary, because an
activity could always be sure that a trigger message is received after all data
items, if the messages are sent in this order. However, this shifts complexity
from the individual activities to the communication platform and the interface,
and it turns out that this makes the proof of compositionality more laborious.
Essentially, we had to prove an additional security predicate $FCIA$ for the
platform and the activities together with several additional side conditions on
the local views in order to obtain compositionality. In this paper, we
therefore present the above model with explicit acknowledgements for
simplicity. However, we intend to further investigate the implications of
different guarantees provided by the communication platform in future work.

\begin{techreport}

\section{Proofs}
\label{app:proofs}

\begin{proof}[Proof of Theorem~\ref{thm:local-security}]
  Intuitively, we can convince ourselves that every possible confidential event
  can be removed from a trace of one of our activities or inserted at
  admissible locations without interfering with visible behaviour. For $Outval$
  events, this follows from the fact that they do not modify the state.
  Confidential $Setval$ and $Recv$ events modify the memory content only of
  confidential data items, and might make it necessary to send a corresponding
  output or acknowledgement afterwards. However, these $Send$ events are
  allowed to be inserted into or removed from the trace by our security
  predicates, because we classified $Send$ events of confidential data items as
  $N$ events.

  Formally, we prove the security of our activities using the unwinding
  technique of \cite{DBLP:conf/esorics/Mantel00}.  We define the following
  unwinding relation such that two states are related if they allow the same
  visible behaviour.
  \begin{align*}
    s \approx_d s' \Longleftrightarrow& \; pc(s) = pc(s') \\
    & \land \left( \forall i \in \documents_d. \; Mem(s)(i) = Mem(s')(i) \right) \\
    & \land \left( \forall b \in \activities, i \in \documents. \; \left(b \in \observers \lor i \in \documents_d \right) \longrightarrow \left( (b, i) \in MQueue(s) \longleftrightarrow (b, i) \in MQueue(s') \right) \right) \\
    & \land \left( \forall b \in \activities, i \in \documents. \; \left(b \in \observers \lor i \in \documents_d \right) \longrightarrow \left( (b, i) \in AQueue(s) \longleftrightarrow (b, i) \in AQueue(s') \right) \right) \\
    & \land SQueue(s) = SQueue(s') \\
    & \land TriggeredBy(s) = TriggeredBy(s') \\
    & \land User(s) = User(s')
  \end{align*}
  where $\documents_d = \left\{ i \in \documents \mid dom(d) \leadsto d
  \right\}$ denotes the set of documents that are visible in domain $d$.

  Theorem 3 of \cite{DBLP:conf/esorics/Mantel00} tells us that we can prove
  $BSD$ and $BSIA$ for our local view by proving the following unwinding
  conditions:
  \begin{align*}
    osc \equiv & \forall s_1, s_1' \in S. \, s_1 \approx_d s_1' \longrightarrow \forall e \in (E \setminus C_a^d). \forall s_2 \in S. \Big[ (s_1, e, s_2) \in T_a \longrightarrow \\
    & \exists \gamma \in (E \setminus C_a^d)^*. \exists s_2' \in S. \left( \gamma|_{V_a^d} = \langle e \rangle |_{V_a^d} \land s_2' \in succ(s_1', \gamma) \land s_2 \approx_d s_2' \right) \Big] \\ \\
    lrf \equiv & \forall s, s' \in S. \forall c \in C_d^a. \left( \left( reachable(s) \land (s, c, s') \in T_a \right) \longrightarrow s' \approx_d s \right) \\ \\
    lrb \equiv & \forall s \in S. \forall c \in C_d^a. \left( \left( reachable(s) \land s \in pre(c) \right) \longrightarrow \exists s' \in S. \; \left( (s, c, s') \in T_a \land s \approx_d s' \right) \right)
  \end{align*}
  where $succ(s, \gamma)$ denotes the set of states that the system can reach
  from state $s$ via the sequence of events $\gamma$, $reachable(s)$ is true
  iff $s$ is reachable from the starting state via some trace, and $pre(c)$ is
  the set of states that satisfy the preconditions of event $c$.

  We first prove $osc$. We choose arbitrary but fixed states $s_1$, $s_2$,
  $s_1'$ and a nonconfidential event $e$. We have to prove that, if $s_1
  \approx_d s_1'$ holds, then the nonconfidential transition $(s_1, e, s_2)$
  can be simulated in the state $s_1'$ by a sequence of events $\gamma$ that
  yields the same visible observations. We proceed by performing a case
  distinction on $e$ and finding a suitable witness $\gamma$ for each case:
  \begin{itemize}
    \item In case $e$ is of the form $Recv_a(b, Trigger)$, $\tau_a^{SendData}$,
      $\tau_a^{AckTimeout}$, $Send_a(b, Trigger)$, $Start_a(u)$, $Setval_a(u,
      i, v)$, $Outval_a(u, i, v)$, or $End_a(u)$, then the assumption $s_1
      \approx_d s_1'$ implies that the preconditions of $e$ are satisfied not
      only in $s_1$, but also in $s_1'$.  Furthermore, the postconditions of
      these events imply that the relation $s_2 \approx_d s_2'$ holds for the
      two successor states $s_2$ and $s_2' \in succ(s_1', \langle e \rangle)$.
      Hence, we can choose $\gamma = \langle e \rangle$ as the witness for
      $osc$.
    \item The same holds for the case where $e$ is of the form $Recv_a(b,
      Data(i, v))$ or $Recv_a(b, AckData(i))$, because it follows from $e
      \notin C_a^d$ and our definition of local views that $b$ must be an
      observer or $i$ a nonconfidential data item, i.e.~$b \in \observers \lor
      i \in \documents_d$. With $s_1 \approx_d s_1'$ we then have $(b, i) \in
      AQueue(s_1)$ iff $(b, i) \in AQueue(s_1')$, hence the preconditions of
      $e$ are also satisfied in $s_1'$ and we can again choose $\gamma =
      \langle e \rangle$ as witness.
    \item If $e = Send_a(b, AckData(i))$, we know that $(b, i) \in
      AckData(s_1)$ from $(s_1, e, s_1') \in T_a$.
      \begin{itemize}
	\item If $(b, i) \in AckData(s_1')$ holds, then $e$ is also enabled in
	  $s_1'$ and we again have $\gamma = \langle e \rangle$ as witness.
	\item If $(b, i) \notin AckData(s_1')$, then $s_1 \approx_d s_1'$
	  implies $b \in \friends \land i \notin \documents_d$. With our
	  definition of local views, this implies $e \in N_a^d$, hence $\langle
	  e \rangle |_{V_a^d} = \langle \rangle$. Furthermore, the
	  postcondition of $e$ then implies that $s_2 \approx_d s_1$ and, by
	  transitivity, $s_2 \approx_d s_1'$. Ultimately, we have $\gamma =
	  \langle \rangle$ as a suitable witness.
      \end{itemize}
    \item Let $e = Send_a(b, Data(i, v))$.
      \begin{itemize}
	\item If $i \in \documents_d$, then $Mem(s_1)(i) = Mem(s_1')(i)$ and
	  $(b, i) \in MQueue(s_1')$ due to $s_1 \approx_d s_1'$. Hence, we can
	  use $\gamma = \langle e \rangle$ as witness.
	\item If $i \notin \documents_d$, then $b$ must be a high activity,
	  because $dom(i) \leadsto cl_{\activities}(b)$ due to
	  Definition~\ref{def:mac}, and with $dom(i) \not \leadsto d$ it
	  follows that $cl_{\activities}(b) \not \leadsto d$ must hold due to
	  transitivity of $\leadsto$. Hence, $e \in N_a^d$, and analogously to
	  the second sub-case of $e = Send_a(b, AckData(i))$ above, we obtain
	  $\gamma = \langle \rangle$ as a suitable witness.
      \end{itemize}
    \item If $e = \tau_a^{Active}$, then $e$ may not be enabled in $s_1'$
      because there are still some acknowledgements waiting in $AQueue(s_1')$.
      Due to $s_1 \approx_d s_1'$ and $AQueue(s_1) = \emptyset$, these
      acknowledgements must be for confidential data items and directed to high
      activities, i.e.\ $\forall (b, i) \in AQueue(s_1'). \, b \in \friends
      \land i \notin \documents_d$. The corresponding events $Send_a(b,
      AckData(i))$ must then be in $N_a^d$, according to our definition of
      local views. Hence, there is a sequence $\delta$ of these events and a
      successor state $s_1'' \in succ(s_1', \delta)$ such that $\delta
      |_{V_a^d} = \langle \rangle$ and $s_1''$ is equal to $s_1'$ except
      $AQueue(s_1'') = \emptyset$. Then, $e = Start_a(u)$ is enabled in
      $s_1''$, and we obtain $\gamma = \delta .  \langle e \rangle$ as witness.
    \item Similarly, if $e = \tau_a^{SendTriggers}$, then $e$ may not be
      enabled in $s_1'$ because $MQueue(s_1')$ or $AQueue(s_1')$ still contain
      confidential data items to be sent or acknowledgements waiting to be
      received. Analogously as for $Start_a(u)$ above, we can obtain a sequence
      $\delta$ of events $Send_a(b, Data(i, v))$ with $(b, i) \in MQueue(s_1')$
      that sends the remaining data items, such that $\delta |_{V_d^a} =
      \langle \rangle$ and $MQueue(s_1'') = \emptyset$ for $s_1'' \in
      succ(s_1', \delta)$. The remaining acknowledgements can be cancelled by
      inserting an event $\tau_a^{AckTimeout}$. Combined, we obtain a witness
      $\gamma = \delta . \langle \tau_a^{AckTimeout} \rangle . \langle e
      \rangle$.
  \end{itemize}

  In order to prove $lrf$, we choose arbitrary but fixed states $s$, $s'$ and a
  confidential event $c \in C_d^a$. Due to our definition of local views, the
  confidential event $c$ must have one of the forms $Recv_a(b, Data(i, v))$,
  $Recv_a(b, AckData(i))$, $Setval_a(u, i, v)$ or $Outval_a(u, i, v)$ with $b
  \in \friends \land i \notin \documents_d$. In each of these cases, the
  contents of the memory and the queues for nonconfidential data items and low
  activities are unaffected. Hence, $s' \approx_d s$ holds.

  Finally, $lrb$ is implied by $lrf$ together with the facts that our unwinding
  relation is symmetric and $s \in pre(c) \longleftrightarrow \exists s'. \,
  (s, c, s') \in T_a$ due to our specification of the transition relation via
  PP-statements.
\end{proof}

\begin{proof}[Proof of Theorem~\ref{thm:comp-security}]
  The proof proceeds as in \cite{hutter_security_2007}, with the difference
  that we allow $Send$ events to be $N$-events in local views, i.e.\ they can
  be used for corrections of perturbations in confidential inputs.  The
  composition of high activities (friends) with the rest of the system is still
  well-behaved, because the second case of Definition 6.3.6 of
  \cite{DBLP:phd/de/Mantel2004} is satisfied: The subsystem obtained by
  composing a set of agents $X \subseteq \mathcal{A}$ and the platform is total
  in $Send$-events of agents $a \notin X$ and satisfies $BSIA$.
\end{proof}

\begin{proof}[Proof of Theorem~\ref{thm:sod-security}]
  We first prove that $BSD$ is preserved when restricting the set of traces
  from $Tr$ to $Tr \cap P_{SoD}^{a, a'}$.  $BSD$ can only be violated if there
  is a confidential event $c \in C$, a trace $\beta . c . \alpha \in Tr$ with
  $\alpha |_{C} = \langle \rangle$, and a perturbed and corrected trace $\beta
  . \alpha' \in Tr$ with $\alpha' |_{V} = \alpha |_{V}$ and $\alpha' |_{C} =
  \langle \rangle$, such that $\beta .  c . \alpha \in P_{SoD}^{a, a'}$, but
  $\beta .  \alpha' \notin P_{SoD}^{a, a'}$. With Definition~\ref{def:sod},
  this means there must be two distinct users $u$ and $u'$ and events $e_1 \in
  (E_a \cap E_u)$ and $e_2 \in (E_{a'} \cap E_{u'})$ such that $e_1$ and $e_2$
  both occur in $\beta . \alpha'$, but not both in $\beta . c . \alpha$.
  However, as by assumption all user interaction events are either visible or
  confidential in $\view$, i.e.\ $E_U \subseteq V \cup C$, and $\alpha |_{V
  \cup C} = \alpha' |_{V \cup C}$, this leads to a contradiction.  Hence, $BSD$
  must hold for the refined system.

  For $BSIA$ to be violated, there must be a trace $\beta . \alpha \in Tr$ with
  $\alpha |_{C} = \langle \rangle$, a confidential event $c \in C$ with $\beta
  . c \in Tr$, and a perturbed and corrected trace $\beta . c . \alpha' \in Tr$
  with $\alpha' |_{V} = \alpha |_{V}$ and $\alpha' |_{C} = \langle \rangle$,
  such that $\beta .  \alpha \in P_{SoD}^{a, a'}$ and $\beta . c \in
  P_{SoD}^{a, a'}$,  but $\beta . c .  \alpha' \notin P_{SoD}^{a, a'}$. Again,
  note that the subsequences of user interaction events in $\alpha$ and
  $\alpha'$ are equal due to $E_U \subseteq V \cup C$, i.e.\ $\alpha |_{E_U} =
  \alpha' |_{E_U}$. We perform a case distinction on the three conditions of
  Theorem~\ref{thm:sod-security}:
  \begin{enumerate}
    \item Assume that user assignments are nonconfidential. Hence, $c \in C$ is
      not a user assignment event. If $c$ is a user interaction event, i.e.\
      $\exists a, u. \, c \in E_u \cap E_a$, then from $\beta . c \in
      P_{SoD}^{a, a'}$ we have that an assignment event $e \in E^{assign} \cap
      E_a \cap E_u$ must have occurred in $\beta$, i.e.\ $a$ has already been
      assigned to $u$ before $c$ has occurred. If $c$ is not a user interaction
      event, it is irrelevant for $P_{SoD}^{a, a'}$. In both cases, with $\beta
      . \alpha \in P_{SoD}^{a, a'}$ we conclude that the sequence of user
      interactions in $\alpha'$, which is the same as in $\alpha$, is still
      possible after $c$, i.e.\ $\beta . c . \alpha' \in P_{SoD}^{a, a'}$,
      leading to a contradiction.
    \item 
      \begin{enumerate}
	\item Assume that all confidential user interaction events with $a$ or
	  $a'$ are disabled, i.e.\ $C \cap (E_a \cup E_{a'}) \cap E_U \subseteq
	  E^{disabled}$. Hence, $c \notin E_a$ and $c \notin E_{a'}$. With
	  $\alpha |_{E_U} = \alpha' |_{E_U}$ we have $\beta . c . \alpha' \in
	  P_{SoD}^{a, a'}$ iff $\beta . \alpha \in P_{SoD}^{a, a'}$, leading to
	  a contradiction.
	\item Assume that all visible user interaction events with $a$ or $a'$
	  are disabled, i.e.\ $V \cap (E_a \cup E_{a'}) \cap E_U \subseteq
	  E^{disabled}$. Hence, all user interaction events of activities $a$
	  and $a'$ in $\beta . c . \alpha'$ must be in $C$. However, as
	  $\alpha' |_{C} = \langle \rangle$, $\beta . c . \alpha' \in
	  P_{SoD}^{a, a'}$ iff $\beta . c \in P_{SoD}^{a, a'}$, leading to a
	  contradiction.
      \end{enumerate}
    \item Assume that $ES$ already enforces the separation of duty constraint,
      i.e.\ $P_{SoD}^{a, a'} \subseteq Tr$. Then $Tr \cap P_{SoD}^{a, a'} = Tr$
      and the conclusion trivially follows. \qedhere
  \end{enumerate}
\end{proof}

\end{techreport}

\end{document}